\newcommand{\ie}{i.\,e.,\xspace}
\newcommand{\eg}{e.\,g.,\xspace}
\newcommand{\aka}{a.\,k.\,a.\xspace}
\newcommand{\etal}{et al.\xspace}
\newcommand{\Oh}{\ensuremath{\mathcal{O}}}
\newcommand{\adapt}{ADS\xspace}
\newcommand{\bc}{BC\xspace}
\newcommand{\mathbc}{\mathbf{b}}
\newcommand{\approxbc}{\widetilde{\mathbf{b}}}
\newcommand{\tool}[1]{\textsf{#1}}
\newcommand{\nwk}{\tool{NetworKit}\xspace}
\newcommand{\kad}{\tool{KADABRA}\xspace}
\newcommand{\rk}{\tool{RK}\xspace}
\newcommand{\darkgreen}{black!60!green}
\newcommand{\overallSuVsOriginalSingle}{6.89846099713451}
\newcommand{\adsSuLfVsOriginal}{57.34439224050918}
\newcommand{\adsSuSfVsOriginal}{65.29488898389054}
\newcommand{\overallSuOmpVsOriginal}{13.544344311728693}
\newcommand{\adsSuOmpVsOriginal}{22.66442436431893}
\newcommand{\adsParallelSuNaive}{6.27616521473421}
\newcommand{\adsParallelSuLf}{15.879639123178466}
\newcommand{\adsParallelSuSf}{18.081267115073295}
\newcommand{\adsParallelSuDt}{10.818961592817336}
\newcommand{\adsSuLfVsOmp}{2.5301499530156892}
\newcommand{\adsSuSfVsOmp}{2.8809418644086824}
\newcommand{\adsSuDtVsOmp}{1.7238172072682618}
\newcommand{\numInst}{27}
\DeclareMathOperator{\rlxmove}{\gets_\mathsf{relaxed}}
\DeclareMathOperator{\loadacq}{\gets_\mathsf{acquire}}
\DeclareMathOperator{\storerel}{\gets_\mathsf{release}}
\begin{document}
\mainmatter
\title{Parallel Adaptive Sampling \\ with almost no Synchronization\thanks{Partially supported by 
grant ME 3619/3-2 within German Research Foundation (DFG) Priority 
Programme 1736 \emph{Algorithms for Big Data}. 
}}
\titlerunning{Parallel Adaptive Sampling}

\author{Alexander van der Grinten\and Eugenio Angriman\and Henning Meyerhenke}
\authorrunning{van der Grinten \etal}
\institute{Department of Computer Science, Humboldt-Universität zu Berlin, Germany\\
\email{\{avdgrinten,angrimae,meyerhenke\}@hu-berlin.de}}

\maketitle

\begin{abstract}
Approximation via sampling is a widespread technique whenever exact solutions
are too expensive. In this paper, we present techniques for an efficient parallelization of adaptive (\aka progressive) sampling algorithms
on multi-threaded shared-memory machines.
Our basic algorithmic technique requires no synchronization
except for atomic \texttt{load-acquire} and \texttt{store-release}
operations.
It does, however, require $\Oh(n)$ memory per thread,
where $n$ is the size of the sampling state.
We present variants of the algorithm that either reduce this memory consumption
to $\Oh(1)$ or ensure that deterministic results are obtained.

Using the \kad algorithm for betweenness centrality (a popular measure
in network analysis) approximation
as a case study, we demonstrate the empirical performance
of our techniques.
\nprounddigits{1}
In particular, on a 32-core machine, our best algorithm is $\numprint{\adsSuSfVsOmp} \times$ faster
than what we could achieve using a straightforward OpenMP-based parallelization
and $\numprint{\adsSuSfVsOriginal} \times$
faster than the existing implementation of \kad.
\npnoround
\end{abstract}
\keywords{
Parallel approximation algorithms,
adaptive sampling,
wait-free algorithms,
betweenness centrality
}

\section{Introduction}
\label{sec:intro}
%
When a computational problem cannot be solved exactly within the desired time
budget, a frequent solution is to employ approximation algorithms~\cite{Gonzalez:2007:HAA:1199638}.
With large data sets being the rule and not the exception today,
approximation is frequently applied, even to polynomial-time problems~\cite{borassi2016kadabra}.
We focus on a particular subclass of approximation algorithms:
\emph{sampling algorithms}. They sample data according to some (usually algorithm-specific)
probability distribution, perform some computation on the sample and induce a result for the full data set.

More specifically, we consider \emph{adaptive} sampling (\adapt) algorithms
(also called \emph{progressive} sampling algorithms).
Here, the number of samples that are required is not statically
computed (\eg from the input instance) but also depends on
the data that has been sampled so far. While non-adaptive sampling algorithms 
can often be parallelized trivially by drawing multiple samples in parallel,
adaptive sampling constitutes a challenge for parallelization: 
checking the stopping condition of an \adapt algorithm
requires access to all the data generated so far and thus
mandates some form of synchronization.

\paragraph*{Motivation and Contribution.}
%
Our initial motivation was a parallel implementation of the sequential 
state-of-the-art approximation algorithm \kad~\cite{borassi2016kadabra} for betweenness centrality
(\bc) approximation. \bc is a very popular centrality measure in network analysis, see Section~\ref{sub:bc-approx} for more details.
To the best of our knowledge, parallel adaptive sampling has not 
received a generic treatment yet. Hence, we propose techniques 
to parallelize \adapt algorithms in a generic way, while scaling
to large numbers of threads.
%
%
%
While we turn to \kad to demonstrate the effec\-tive\-ness of the proposed algorithms, 
our techniques can be adjusted easily to other \adapt algorithms. 

%
We introduce two new parallel \adapt algorithms,
which we call \emph{local-frame} and
\emph{shared-frame}.
Both algorithms try to avoid extensive synchronization when checking the
stopping condition. This is done by maintaining multiple copies of the sampling state
and ensuring that the stopping condition is never checked on a copy
of the state that is currently being written to.
%
%
%
\emph{Local-frame} is designed to use the least amount of synchronization possible
-- at the cost of an additional memory footprint of
$\Theta(n)$ per thread, where $n$ denotes the size of the sampling state.
This algorithm performs only atomic \texttt{load-acquire} and
\texttt{store-release} operations for synchronization, but no
expensive read-modify-write operations (like \texttt{CAS} or \texttt{fetch-add}).
%
%
%
\emph{Shared-frame}, in turn, aims instead at meeting a
desired tradeoff between memory footprint and synchronization overhead.
In contrast to \emph{local-frame}, it requires only $\Theta(1)$
additional memory per thread, but
uses atomic read-modify-write operations (\eg \texttt{fetch-add}) to accumulate samples.
We also propose the deterministic \emph{indexed-frame} algorithm; it
guarantees that the results of two different executions is the same
for a fixed random seed, regardless of the number of threads.

%
\nprounddigits{1}
Our experimental results show that local-frame, shared-frame and indexed-frame
achieve parallel speedups of $\numprint{\adsParallelSuLf}\times$,
$\numprint{\adsParallelSuSf}\times$,
and $\numprint{\adsParallelSuDt}\times$ on 32 cores, respectively.
Using the same number of cores, our OpenMP-based parallelization (functioning as a baseline)
only yields a speedup of $\numprint{\adsParallelSuNaive}\times$;
thus our algorithms are up to $\numprint{\adsSuSfVsOmp}\times$ faster.
Moreover, also due to implementation improvements and parameter tuning,
our best algorithm performs adaptive sampling $\numprint{\adsSuSfVsOriginal}\times$
faster than the existing implementation of \kad (when all implementations use 32 cores).

\npnoround

\section{Preliminaries and Baseline for Parallelization}
\label{sec:prelim}
%
\subsection{Basic Definitions}
\label{sub:basic-defs}
\paragraph{Memory model.} Throughout this paper, we target a multi-threaded shared-memory machine
with $T$ threads.
We work in the C11 memory model~\cite{ISO:2012:III} (more details in Appendix~\ref{app:c11_model});
in particular, we assume the existence of the usual atomic operations, as well as
\texttt{load-acquire} and \texttt{store-release} barriers.

\paragraph{Adaptive sampling.} 
\begin{algorithm}[bt]
\caption{Generic Adaptive Sampling}
\label{algo:adaptive}
\begin{minipage}[t]{0.50\textwidth}
Variable initialization:
\begin{algorithmic}
	\State $d \gets$ new sampling state structure
	\State $d\mathtt{.data} \gets (0, \ldots, 0)$ \Comment{Sampled data.}
	\State $d\mathtt{.num} \gets 0$ \Comment{Number of samples.}
\end{algorithmic}
\end{minipage}
\hspace{0.05\textwidth}
\begin{minipage}[t]{0.45\textwidth}
Main loop:
\begin{algorithmic}
	\While{\textbf{not} \Call{checkForStop}{$d$}}
		\State $d\mathtt{.data} \gets d\mathtt{.data} \circ \Call{sample}{\null}$
		\State $d\mathtt{.num} \gets d\mathtt{.num} + 1$
	\EndWhile
\end{algorithmic}
\end{minipage}
\end{algorithm}
For our techniques to be applicable, 
we expect that an \adapt
algorithm behaves as depicted
in Algorithm~\ref{algo:adaptive}: it
iteratively samples data (in \Call{sample}{}) and aggregates it
(using some operator $\circ$), until a stopping condition (\Call{checkForStop}{})
determines that the data sampled so far is sufficient to
return an approximate solution within the required accuracy.
This condition does not only consider the number of samples
($d\mathtt{.num}$),
but also the sampled data
($d\mathtt{.data}$). Throughout this paper, we denote the size
of that data (\ie the number of elements of $d.\mathtt{data}$) by $n$.
We assume that the stopping condition needs to be checked on
a \emph{consistent} state, \ie a state of $d$ that can
occur in a sequential
execution.\footnote{That is, $d\mathtt{.num}$ and all entries of $d\mathtt{.data}$ must
result from an integral sequence of samples; otherwise, parallelization
would be trivial.}
Furthermore, to make parallelization feasible at all, we need to assume
that $\circ$ is associative.

\subsection{Betweenness Centrality and its Approximation}
\label{sub:bc-approx}
\emph{Betweenness Centrality} (\bc) is one of the most popular vertex centrality measures
in the field of network analysis. Such measures indicate the importance of a vertex
based on its position in the network~\cite{boldi2014axioms}
(we use the terms \emph{graph} and \emph{network} interchangeably).
Being a centrality measure, \bc constitutes a function
$\mathbc: V \to \mathbb{R}$ that maps each vertex of a graph $G = (V, E)$ to
a real number -- higher numbers represent higher importance.
To be precise, the \bc of $u \in V$ is defined as
$
  \mathbc(u) = \sum_{s \neq t \in V \setminus \{u\}} \frac{\sigma_{st}(u)}{\sigma_{st}}, 
$
where $\sigma_{st}$ is the number of shortest $s$-$t$-paths
and $\sigma_{st}(u)$ is the number of shortest $s$-$t$-paths
that contain $u$.

Unfortunately, \bc is rather expensive to compute: the standard exact
algorithm~\cite{brandes2001faster} has time complexity $\Theta(|V||E|)$ for unweighted graphs.
Moreover, unless the Strong Exponential Time Hypothesis 
fails, this asymptotic running time cannot be improved~\cite{borassi2016into}.
Numerous approximation algorithms for \bc have thus been developed
(we refer to Section~\ref{sec:related_work} for an overview).
The state of the art of these approximation algorithms is the $\kad$
algorithm~\cite{borassi2016kadabra} of Borassi and Natale,
which happens to be an \adapt algorithm.
With probability $(1 - \delta)$,
\kad approximates the \bc values of the vertices within an additive error of $\epsilon$
in nearly-linear time complexity, where $\epsilon$ and $\delta$ are user-specified constants.

While our techniques apply to any \adapt algorithm,
we recall that, as a case study, we focus on scaling the \kad algorithm to
a large number of threads.

\subsection{The \kad algorithm}
\label{sub:kadabra_algo}
\kad samples vertex pairs
$(s, t)$ of $G = (V, E)$ uniformly at random
and then selects  a shortest $s$-$t$-path uniformly at random
(in \Call{sample}{} in Algorithm~\ref{algo:adaptive}).
After $\tau$ iterations, this results in a sequence of randomly selected shortest paths
$\pi_1, \pi_2, \dots, \pi_\tau$;
from those paths, \bc is estimated as:

\begin{equation*}
  \approxbc(v) = \frac{1}{\tau}\sum_{i = 1}^{\tau}x_i(v),
  \quad
  x_i(v) = 
  \begin{cases}
    1 & \text{ if } v \in \pi_i\\
    0 & \text{otherwise.}
  \end{cases}
\end{equation*}
$\sum_{i = 1}^\tau x_i$ is exactly the sampled data ($d\mathtt{.data}$)
that the algorithm has to store (\ie the accumulation $\circ$ in
Algorithm~\ref{algo:adaptive} sums $x_i$ over $i$).
To compute the stopping condition (\Call{checkForStop}{} in Algorithm~\ref{algo:adaptive}),
\kad maintains the invariants
\begin{equation}
\label{eq:kad-inv}
  \Pr(\mathbc(v) \le \approxbc(v) - f) \le \delta_{L}(v)
  \text{ ~and~ }
  \Pr(\mathbc(v) \ge \approxbc(v) + g) \le \delta_{U}(v)
\end{equation}
for two functions
$f = f(\approxbc(v), \delta_{L}(v), \omega, \tau)$ and
$g = g(\approxbc(v), \delta_{U}(v), \omega, \tau)$
depending on a maximal number $\omega$ of samples and
per-vertex probability constants $\delta_L$ and $\delta_U$
(more details in the original paper~\cite{borassi2016kadabra}).
The values of those constants are computed in a preprocessing phase
(mostly consisting of computing an upper bound on the diameter of the graph).
$\delta_L$ and $\delta_U$ satisfy
$\sum_{v \in V} \delta_L(v) + \delta_U(v) \leq \delta$
for a user-specified parameter $\delta \in (0, 1)$.
Thus, the algorithm terminates once $f, g < \epsilon$; 
the result is correct with an absolute error of $\pm \epsilon$ and probability $(1 - \delta)$.
We note that checking the stopping condition of \kad
on an inconsistent state
leads to incorrect results.
For example, this can be seen from the fact that
$g$ is increasing with $\approxbc$ and decreasing with $\tau$, see Appendix~\ref{app:kadabra}.

\subsection{First Attempts at \kad Parallelization}
\label{sub:first-attempts}
In the original \kad implementation\footnote{
Available at: \url{https://github.com/natema/kadabra}},
a lock is used to synchronize
concurrent access to the sampling state.
As a first attempt to improve the scalability,
we consider an algorithm that iteratively computes a fixed number
of samples in parallel (\eg using an OpenMP \texttt{parallel for} loop),
then issues a synchronization barrier (as implied by the \texttt{parallel for} loop)
and checks the stopping condition afterwards. While sampling,
atomic increments are used to update the global sampling data.
This algorithm is arguably the \enquote{natural} OpenMP-based parallelization
of an \adapt algorithm and can be implemented in
a few extra lines of code. Moreover, it already improves upon
the original parallelization. However, as
shown by the experiments in Section~\ref{sec:experiments}, further significant
improvements in performance are possible by switching
to more lightweight synchronization.

\section{Scalable Parallelization Techniques}
\label{sec:algorithms}
To improve upon the OpenMP parallelization from Section~\ref{sub:first-attempts},
we have to avoid the synchronization barrier before the stopping condition can be
checked.
This is the objective of our \emph{epoch-based}
algorithms that constitute the main contribution of this paper.
In Section~\ref{sub:epoch_based}, we formulate the main idea of our algorithms
as a general framework and prove its correctness. The subsequent subsections present
specific algorithms based on this framework and discuss tradeoffs between them.

\subsection{Epoch-based Framework}
\label{sub:epoch_based}

\begin{figure}[t]%
\begin{subfigure}[t]{.4\textwidth}
\quad
\fbox{\begin{tabular}{r@{\quad}l@{\quad}l}
	\textsf{int} & \texttt{epoch} &$\gets e$ \\
	\textsf{int} & \texttt{num} &$\gets 0$\\
	\textsf{int} & \texttt{data}[$n$] &$\gets (0, \ldots, 0)$
\end{tabular}}%
\caption{Structure of a state frame (SF) for epoch $e$.
	\texttt{num}: Number of samples, \texttt{data}: Sampled data}%
\label{fig:sf}
\end{subfigure}%
\begin{subfigure}[t]{.6\textwidth}
\qquad
\qquad
\begin{tabular}{r@{\quad}l@{\quad}l}
	\textsf{bool} & \texttt{stop} &$\gets$ \textsf{false}\\
	\textsf{int} & \texttt{epochToRead} &$\gets 0$\\
	SF $\ast$ & \texttt{sfFin[$T$]} &$\gets (\textsf{null}, \ldots, \textsf{null})$
\end{tabular}%
\caption{Shared variables}%
\label{fig:globalvars}
\end{subfigure}%
\caption{Data structures used in epoch-based algorithms, including initial values}%
\label{fig:epoch-state}
\end{figure}

In our epoch-based algorithms,
the execution of each thread is subdivided
into a sequence of discrete \emph{epochs}.
During an epoch, each thread iteratively collects samples;
the stopping condition is only checked at the end of an epoch.
The crucial advantage
of this approach is that the end of an epoch \emph{does not} require
global synchronization.
Instead, our framework guarantees the consistency of the sampled data by
maintaining multiple copies of the sampling state.

As an invariant, it is guaranteed that that
no thread writes to a copy of the state that is currently being read by another thread.
This is achieved as follows: each copy of the sampling state is labeled by
an epoch \emph{number} $e$, \ie a monotonically increasing integer
that identifies the epoch in which the data was generated.
When the stopping condition has to be checked,
all threads advance to a new epoch $e + 1$ and start writing
to a new copy of the sampling state.
The stopping condition is only verified after all threads have finished this transition
and it only takes the sampling state of epoch $e$ into account.

More precisely, the main data structure that we 
use to store the
sampling state is called a
\emph{state frame} (SF). Each SF $f$ (depicted in Figure~\ref{fig:sf})
consists of (i) an epoch number ($f.\mathtt{epoch}$),
(ii) a number of samples ($f.\mathtt{num}$) and (iii) the sampled
data ($f.\mathtt{data}$). The latter two
symbols directly correspond to $d.\texttt{num}$ and
$d.\mathtt{data}$ in our generic formulation of an adaptive sampling
algorithm (Algorithm~\ref{algo:adaptive}).
Aside from the SF structures,
our framework maintains three global variables that are shared
among all threads (depicted in Figure~\ref{fig:globalvars}):
(i) a simple Boolean flag
\texttt{stop} to determine if the algorithm should terminate,
(ii) a variable \texttt{epochToRead} that stores the number
of the epoch that we want to check the stopping condition on
and (iii) a pointer
\texttt{sfFin}[$t$] for each thread $t$ that points to a
SF finished by thread $t$.
Incrementing $\mathtt{epochToRead}$ is our synchronization
mechanism to notify all threads that they should advance
to a new epoch. This transition is visualized in Figure~\ref{fig:advance}
in Appendix~\ref{app:epoch-transition}.

\begin{algorithm}[t]
\caption{Epoch-based Approach}
\label{algo:epoch-based}
\begin{minipage}{.49\textwidth}
	Per-thread variable initialization:
	\begin{algorithmic}
		\State $e_\mathrm{sam} \gets 1$
		\State $f_\mathrm{sam} \gets$ new SF for $e_\mathrm{sam} = 1$
		\If{$t = 0$}
			\State $e_\mathrm{chk} \gets 0$
			\State $inCheck \gets$ \textsf{false}
		\EndIf
	\end{algorithmic}
	Main loop for thread $t$:
	\begin{algorithmic}[1]
		\Loop
			\State $doStop \rlxmove \mathtt{stop}$ \label{line:main-loop}
			\If{$doStop$}
				\State \textbf{break}
			\EndIf
			\State $f_\mathrm{sam}\mathtt{.data}
				\gets f_\mathrm{sam}\mathtt{.data} \circ \Call{sample}{\null}$
			\State $f_\mathrm{sam}\mathtt{.num}
				\gets f_\mathrm{sam}\mathtt{.num} + 1$
			\State $r \rlxmove$ \texttt{epochToRead} \label{line:advance-epoch}
			\If{$r =$ $e_\mathrm{sam}$}
				\State reclaim SF of epoch $e_\mathrm{sam} - 1$ \label{line:reclaim}
				\State \texttt{sfFin}[t] $\storerel$ $f_\mathrm{sam}$ \label{line:publish}
				\State $e_\mathrm{sam}$ $\gets$ $e_\mathrm{sam}$ $+ 1$
				\State $f_\mathrm{sam}$ $\gets$ new SF for $e_\mathrm{sam}$ \label{line:new-sf}
			\EndIf
			\If{$t = 0$} \label{line:thread-zero-call}
				\State \Call{checkFrames}{\null}
			\EndIf
		\EndLoop
	\algstore{epochbased}
	\end{algorithmic}
\end{minipage}\hspace{.02\textwidth}
\begin{minipage}{.49\textwidth}
	Check of stopping condition by thread $0$:
	\begin{algorithmic}[1]
	\algrestore{epochbased}
		\Procedure{checkFrames}{\null}
			\If{\textbf{not} $inCheck$} \label{line:check-cycle}
				\State $e_\mathrm{chk} \gets e_\mathrm{chk} + 1$
				\State \texttt{epochToRead} $\rlxmove e_\mathrm{chk}$
				\State $inCheck \gets$ \textsf{true}
			\EndIf
			\For{$i \in \{1, \ldots, T\}$} \label{line:check-frames}
				\State $f_\mathrm{fin} \loadacq$ \texttt{sfFin}[t] \label{line:subscribe}
				\If{$f_\mathrm{fin} = \mathsf{null}$}
					\State \Return
				\EndIf
				\If{$f_\mathrm{fin}$.\texttt{epoch} $\neq e_\mathrm{chk}$}
					\State \Return
				\EndIf
			\EndFor
			\State $d \gets$ new SF for accumulation
			\For{$i \in \{1, \ldots, T\}$} \label{line:accumulate}
				\State $f_\mathrm{fin} \rlxmove$ \texttt{sfFin}[t]
				\State $d\mathtt{.data}
					\gets d\mathtt{.data} \circ f_\mathrm{fin}\mathtt{.data}$
				\State $d\mathtt{.num}
					\gets d\mathtt{.num} + f_\mathrm{fin}\mathtt{.num}$
			\EndFor
			\If{\Call{checkForStop}{$d$}} \label{line:convergence}
				\State $\mathtt{stop} \rlxmove$ \textsf{true} \label{line:do-stop}
			\EndIf
			\State $inCheck \gets$ \textsf{false}
		\EndProcedure
	\end{algorithmic}
\end{minipage}
\end{algorithm}

Algorithm~\ref{algo:epoch-based} states the pseudocode of our framework.
By $\rlxmove$, $\loadacq$ and $\storerel$,
we denote relaxed memory access, \texttt{load-acquire} and \texttt{store-release},
respectively (see Sections~\ref{sub:basic-defs} and Appendix~\ref{app:c11_model}).
In the algorithm, each thread maintains an epoch number
$e_\mathrm{sam}$.
To be able to check the stopping condition,
thread 0 maintains another epoch number $e_\mathrm{chk}$.
Indeed, thread 0 is the only thread that evaluates the stopping condition
(in \textsc{checkFrames}) after accumulating the SFs from all threads.
\textsc{checkFrames} determines whether there is an
ongoing check for the stopping condition
($inCheck$ is \textsf{true}; line~\ref{line:check-cycle}).
If that is not the case, a check is initiated
(by incrementing $e_\mathrm{chk}$)
and all threads are signaled to advance
to the next epoch (by updating $\texttt{epochToRead}$).
Afterwards, \textsc{checkFrames} only continues
if all threads $t$ have published their SFs for checking
(\ie $\mathtt{sfFin}[t]$ points to a SF of epoch $e_\mathrm{chk}$;
line~\ref{line:check-frames}).
Once that happens, those SFs are accumulated (line~\ref{line:accumulate})
and the stopping condition is checked on
the accumulated data (line~\ref{line:convergence}).
Eventually, the termination flag (\texttt{stop}; line~\ref{line:do-stop}) signals
to all threads that they should stop sampling.
The main algorithm, on the other hand, performs a loops until this
flag is set (line~\ref{line:main-loop}). Each iteration
collects one sample and writes the results to the
current SF ($f_\mathrm{sam}$).
If a thread needs to advance to a new epoch
(because an incremented \texttt{epochToRead} is read in line~\ref{line:advance-epoch}),
it publishes its current SF to \texttt{sfFin} and starts
writing to a new SF ($f_\mathrm{sam}$; line~\ref{line:new-sf}).
Note that the memory used by old SFs can be reclaimed (line~\ref{line:reclaim};
however, note that there is no SF for epoch 0).
How exactly that is done is left to the algorithms described in later
subsections.
In the remainder of this subsection, we prove the correctness of our approach.

\begin{proposition}
Algorithm~\ref{algo:epoch-based} always checks the stopping condition
on a consistent state;
in particular, the epoch-based approach is correct.
\end{proposition}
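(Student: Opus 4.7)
The plan is to show that when \textsc{checkForStop}($d$) is invoked on line~\ref{line:convergence}, the accumulated frame $d$ corresponds to a state reachable by some integral sequence of samples, which is exactly the notion of consistency from Section~\ref{sub:basic-defs}. I would organize the argument around two invariants: (I1) once thread $t$ executes the store-release on line~\ref{line:publish} with $e_\mathrm{sam} = e$, it never again writes to the published SF before that SF is reclaimed; and (I2) whenever thread $0$'s load-acquire on line~\ref{line:subscribe} observes a pointer to an SF with epoch $e_\mathrm{chk}$, all writes that $t$ performed on that SF during epoch $e_\mathrm{chk}$ are visible to thread $0$ during the accumulation on line~\ref{line:accumulate}.

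I would establish (I1) directly from the code: after the store-release on line~\ref{line:publish}, the thread increments $e_\mathrm{sam}$ and allocates a fresh SF on line~\ref{line:new-sf}, so all subsequent sampling writes target the new $f_\mathrm{sam}$. The only other place that touches an old frame is the reclamation on line~\ref{line:reclaim}, which handles the SF of epoch $e_\mathrm{sam} - 1$; a small induction coordinating $e_\mathrm{chk}$, $e_\mathrm{sam}$, $\mathtt{epochToRead}$, and $inCheck$ shows that a thread advances past epoch $e$ only after observing $\mathtt{epochToRead} > e$, and $\mathtt{epochToRead}$ is increased only at the start of a fresh check by thread $0$, which in turn cannot begin until the previous accumulation has completed. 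Invariant (I2) then follows from the C11 release/acquire semantics: the updates to $f_\mathrm{sam}.\mathtt{data}$ and $f_\mathrm{sam}.\mathtt{num}$ inside the main loop are sequenced before the store-release on line~\ref{line:publish}, and thread $0$'s load-acquire on line~\ref{line:subscribe} synchronizes with that store-release, so all of those writes happen-before the accumulation.

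Combining (I1) and (I2), for each thread $t$ the SF observed during a successful pass of \textsc{checkFrames} contains exactly the aggregation under $\circ$ of some finite sequence $S_t$ of sample values drawn during epoch $e_\mathrm{chk}$, with $f_\mathrm{fin}.\mathtt{num} = |S_t|$. The accumulation loop therefore computes $d.\mathtt{num} = \sum_t |S_t|$ and a value $d.\mathtt{data}$ equal to some parenthesization of $\circ$ applied to $S_1 S_2 \cdots S_T$. By associativity of $\circ$ (assumed in Section~\ref{sub:basic-defs}), this equals the value produced by a sequential execution that draws the samples of $S_1 S_2 \cdots S_T$ in that order; hence the state fed to \textsc{checkForStop} can occur in a sequential run and is therefore consistent.

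The main obstacle I anticipate is not in the high-level structure but in the bookkeeping for (I1): pinning down precisely that, at the instant thread $0$'s load-acquire observes an SF of epoch $e_\mathrm{chk}$, no reclamation of that SF has occurred or can occur before thread $0$ releases it. A secondary point worth verifying explicitly is that the subsequent \emph{relaxed} read of $\mathtt{sfFin}[t]$ inside the accumulation loop still returns the same pointer as the preceding load-acquire; this holds because thread $t$ cannot overwrite $\mathtt{sfFin}[t]$ without first observing a further increment of $\mathtt{epochToRead}$, and no such increment is issued before the ongoing check terminates and resets $inCheck$ to \textsf{false}.
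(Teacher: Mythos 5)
Your proposal is correct and follows essentially the same route as the paper's proof: no thread writes to an SF after publishing it (your I1 is the paper's observation about the ordering of lines~\ref{line:publish} and~\ref{line:new-sf}), the \texttt{store-release}/\texttt{load-acquire} pair makes all prior stores visible before accumulation (your I2), and associativity of $\circ$ yields a state realizable by a sequential execution. Your additional bookkeeping about reclamation timing and the relaxed re-read of $\mathtt{sfFin}[t]$ is more careful than the paper's argument, which leaves those points implicit, but it does not change the structure of the proof.
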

\begin{proof}
	The order of lines~\ref{line:publish} and~\ref{line:new-sf} implies that
	no thread $t$ issues a store to a SF $f$ which it already published to
	$\mathtt{sfFin}[t]$. Nevertheless, we need to prove that all
	stores by thread $t$ are visible to \Call{checkFrames}{} before
	the frames are accumulated. \Call{checkFrames}{} only
	accumulates $f.\mathtt{data}$ after $f$ has been published to $\mathtt{sfFin}[t]$
	via the \texttt{store-relase} in line~\ref{line:publish}.
	Furthermore, in line~\ref{line:subscribe}, \Call{checkFrames}{} performs at least one
	\texttt{load-acquire} on $\mathtt{sfFin}[t]$ to read the pointer to $f$.
	Thus, all stores to $f$ are visible to \Call{checkFrames}{}
	before the accumulation in line~\ref{line:accumulate}.
	The proposition now follows from the fact that $\circ$ is associative,
	so that line~\ref{line:accumulate} indeed produces a SF that
	occurs in some sequential execution.
\qed
\end{proof}

\subsection{Local-frame and Shared-frame Algorithm}
We present two epoch-based algorithms relying on the general
framework from the previous section:
namely, the
\emph{local-frame} and the \emph{shared-frame} algorithm.
Furthermore, in Appendix~\ref{app:deterministic}, we present
the deterministic indexed-frame algorithm (as both local-frame and shared-frame
are non-deterministic).
Local-frame and shared-frame are both based on
the pseudocode in Algorithm~\ref{algo:epoch-based}. They differ, however,
in their allocation and reuse (in line~\ref{line:reclaim} of the code) of SFs.
The local frame
algorithm allocates one pair of SFs per thread and cycles through
both SFs of that pair
(\ie epochs with even numbers are assigned the first SF while odd epochs use the second SF). This
yields a per-thread memory requirement of $\mathcal{O}(n)$;
as before, $n$ denotes the size of the sampling state.
The shared-frame algorithm
reduces this memory requirement to $\mathcal{O}(1)$ by only allocating
$F$ pairs of SFs in total, for a constant number $F$. Thus, $T/F$ threads
share a SF in each epoch and atomic \texttt{fetch-add} operations need to be
used to write to the SF.
The parameter $F$ can be used to balance the memory bandwidth and synchronization
costs -- a smaller value of $F$ lowers the memory bandwidth
required during aggregation but leads to more cache contention due to atomic operations.

\subsection{Synchronization Costs}
\label{sub:sync-costs}

In Algorithm~\ref{algo:epoch-based}, all synchronization of threads $t > 0$
is done wait-free
in the sense that the threads only have to stop sampling for $\Theta(1)$
instructions to communicate with other threads
(\ie to check $\mathtt{epochToRead}$, update per-thread state and write to $\mathtt{sfFin}[t]$).
At the same time, thread $t = 0$ generally needs to check all $\mathtt{sfFin}$ pointers.
Taken together, this yields the following statement:
\begin{proposition}
	In each iteration of the main loop, threads $t > 0$
	of local-frame and shared-frame algorithms
	spend $\Theta(1)$ time to wait for other threads.
	Thread $t = 0$ spends up to $\mathcal{O}(T)$ time to wait for other threads.
\end{proposition}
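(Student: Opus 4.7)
The plan is to walk through one iteration of the main loop of Algorithm~\ref{algo:epoch-based}, case-split on whether the thread is $t > 0$ or $t = 0$, and verify that every synchronization primitive used is wait-free, i.e.\ completes in $\Theta(1)$ hardware time independent of other threads' progress. Because the only atomic operations invoked are relaxed loads/stores, \texttt{load-acquire}, \texttt{store-release}, and (in the shared-frame variant) \texttt{fetch-add}, none of them ever blocks; the only sources of work that could grow with $T$ are loops that explicitly scan the \texttt{sfFin} array.

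For threads $t > 0$ I would enumerate the synchronization-relevant lines in one iteration: the relaxed load of \texttt{stop} on line~\ref{line:main-loop}, the relaxed load of \texttt{epochToRead} on line~\ref{line:advance-epoch}, and, whenever the epoch advances, the \texttt{store-release} to \texttt{sfFin}[$t$] on line~\ref{line:publish} together with a constant amount of local bookkeeping (frame reclamation, counter increment, and allocation of a new SF on line~\ref{line:new-sf}). In the shared-frame variant the writes to $f_\mathrm{sam}\mathtt{.data}$ and $f_\mathrm{sam}\mathtt{.num}$ additionally become atomic \texttt{fetch-add}s, but still only a constant number per sampling step. Summing yields $\Theta(1)$ per iteration.

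For thread $t = 0$ I would additionally account for the call to \textsc{checkFrames} guarded by line~\ref{line:thread-zero-call}. In the worst case \textsc{checkFrames} performs two passes over the $T$ entries of \texttt{sfFin}: one \texttt{load-acquire} scan on line~\ref{line:check-frames} to verify that every thread has published an SF for the current $e_\mathrm{chk}$, and one accumulation pass on line~\ref{line:accumulate}. Each of the $2T$ inner iterations contributes $\Theta(1)$ synchronization work, so the total per main-loop iteration is bounded by $\mathcal{O}(T)$.

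The step I would flag as the most delicate is distinguishing \emph{synchronization work} from actual \emph{waiting}, because the proposition is phrased in terms of time spent waiting for other threads. Here I would emphasize that \textsc{checkFrames} never spins: as soon as it encounters a thread whose frame pointer is \textsf{null} or whose published epoch does not match $e_\mathrm{chk}$, it returns immediately, and thread $0$ resumes sampling and retries on a later iteration. Thus the $\mathcal{O}(T)$ bound really is a worst-case bound on real-time waiting, and not a disguised spin loop; combined with the per-iteration count above, this yields the proposition.
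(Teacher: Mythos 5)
Your proposal is correct and follows essentially the same reasoning as the paper, which likewise argues that threads $t > 0$ only pause sampling for a constant number of operations (reading \texttt{epochToRead}, updating per-thread state, publishing to \texttt{sfFin}$[t]$) while thread $0$ must additionally scan all $T$ entries of \texttt{sfFin} in \textsc{checkFrames}. Your explicit observation that \textsc{checkFrames} returns early rather than spinning is a useful clarification of why the $\mathcal{O}(T)$ bound genuinely measures waiting time, but it does not change the argument.
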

In particular, the synchronization cost does not depend on the problem instance
-- this is in contrast to the OpenMP parallelization in which threads can idle
for $\mathcal{O}(\mathcal{S})$ time, where $\mathcal{S}$ denotes the time complexity
of a sampling operation (\eg $\mathcal{S} = \mathcal{O}(|V| + |E|)$ in the case
of \kad).

Nevertheless, this advantage in synchronization costs comes at a price:
the accumulation of the sampling data requires
additional evaluations of $\circ$.
$\mathcal{O}(Tn)$ evaluations are required in the local-frame algorithm,
whereas shared-frame requires $\mathcal{O}(Fn)$.
No accumulation is necessary in the OpenMP baseline.
As can be seen in Algorithm~\ref{algo:epoch-based}, we perform the
accumulation in a single thread (\ie thread 0). Compared to
a parallel implementation (\eg using parallel reductions), this strategy
requires no additional synchronization and has a favorable memory access pattern
(as the SFs are read linearly).
A disadvantage, however, is that there is a higher latency (depending on $T$) until
the algorithm detects that it is able to stop.
Appendix~\ref{app:latency} discusses how a constant latency can be achieved heuristically.


\section{Experiments}
\label{sec:experiments}
\nprounddigits{1}

\begin{figure}[t]%
\begin{subfigure}[t]{0.48\textwidth}%
	\includegraphics{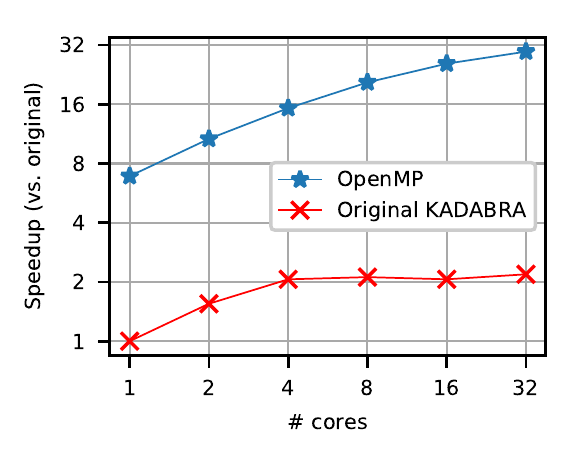}
	\caption{Average speedup (preprecessing + ADS, geom.\ mean) of OpenMP baseline over
		the original sequential implementation of \kad}
	\label{fig:original_su}
\end{subfigure}\hspace{0.04\textwidth}%
\begin{subfigure}[t]{0.48\textwidth}%
	\includegraphics{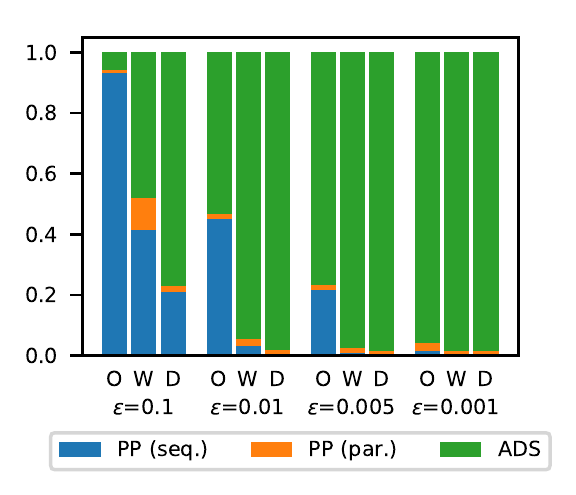}
	\caption{Breakdown of sequential \kad running times into
		preprocessing and ADS (in percent) on instances
		orkut-links~(O), wikipedia\_link\_de~(W),
		and dimacs9-COL~(D)}
	\label{fig:rt_split}
\end{subfigure}%
\caption{Performance of OpenMP baseline}
\end{figure}

\begin{figure}[t]
\begin{subfigure}[t]{0.48\textwidth}%
	\includegraphics{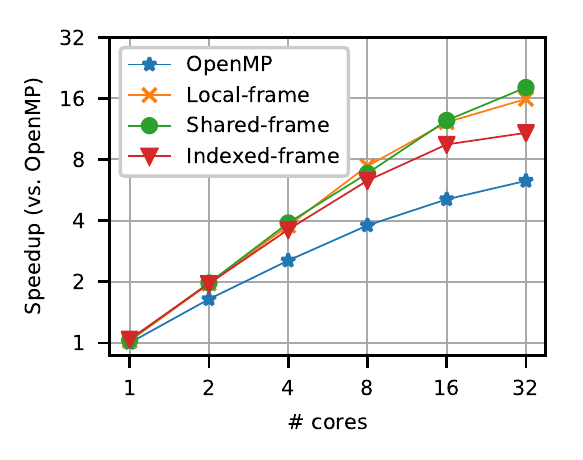}
	\caption{Average \adapt speedup (geom.\ mean) of epoch-based algorithms
		over sequential OpenMP baseline}
	\label{fig:adaptive_su}
\end{subfigure}\hspace{0.04\textwidth}%
\begin{subfigure}[t]{0.48\textwidth}%
	\includegraphics{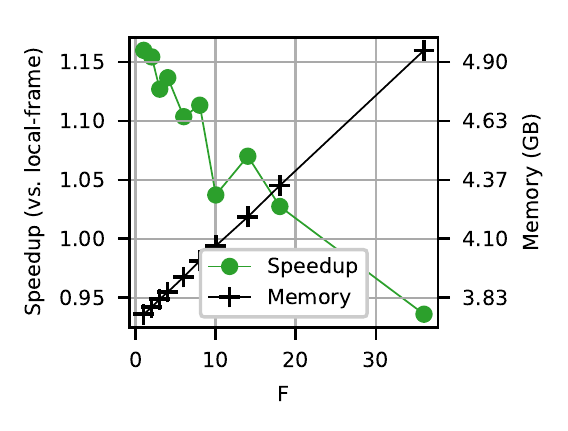}
	\caption{Average \adapt speedup (over 36-core local-frame, geom.\ mean)
		and memory consumption
		of shared-frame, depending on the number of SFs}
	\label{fig:mem_rt}
\end{subfigure}%
\caption{Performance of epoch-based algorithms}
\end{figure}

The platform we use for our experiments is a Linux server
equipped with 1.5TB RAM and two Intel Xeon Gold 6154 CPUs
with 18 cores (for a total of 36 cores) at 3.00GHz.
Each thread of the algorithm is pinned to a unique core; hyperthreading is disabled.
Our implementation\footnote{When this paper is accepted,
	we are keen to publish our code on the \nwk GitHub repository:
	\url{https://github.com/kit-parco/networkit}.
	In the meantime, our code is available at
	\url{https://gist.github.com/angriman/cfb729c1c369198b8a1a36aad1f52fcc}.}
is written in C++ building upon the
\nwk toolkit~\cite{staudt2016networkit}.
We use \numInst\xspace undirected real-world graphs in the experiments
(see Appendix~\ref{app:instances} for more details).
The error probability for \kad is set to $\delta = 0.1$ for all experiments.
Absolute running times of our experiments are reported in Appendix~\ref{app:raw-data}.

In a first experiment, we compare our OpenMP baseline
against the original implementation of \kad
(see Section~\ref{sub:first-attempts} for these two approaches).
We set the absolute approximation error to $\epsilon = 0.01$.
The overall speedup (\ie both preprocessing and ADS) is reported
in Figure~\ref{fig:original_su}.
The results show that our OpenMP baseline
outperforms the original implementation considerably
(\ie by a factor of $\numprint{\overallSuVsOriginalSingle}\times$),
even in a single-core setting.
This is mainly due to implementation tricks (see Appendix~\ref{app:impr_seq})
and parameter tuning (as discussed in Appendix~\ref{app:tuning}).
Furthermore, for 32 cores, our OpenMP baseline performs
$\numprint{\overallSuOmpVsOriginal}\times$ better than the original impelementation of \kad\ -- 
or $\numprint{\adsSuOmpVsOriginal}\times$ if only the ADS phase is considered.
Hence, for the remaining experiments, we discard the original implementation
as a competitor and focus on the parallel speedup of our algorithms.

To understand the relation between the preprocessing and ADS phases
of \kad,
we break down the running times of the OpenMP baseline in Figure~\ref{fig:rt_split}.
In this figure, we present the fraction of time that is spent in ADS
on three exemplary instances and for different values of $\epsilon$.
Especially if $\epsilon$ is small, the ADS running time dominates the overall performance of the
algorithm.
Thus, improving the scalability of the ADS phase is of critical importance.
For this reason, we neglect the preprocessing phase and only consider ADS when comparing
to our local-frame and shared-frame algorithms.

In Figure~\ref{fig:adaptive_su}, we report the parallel speedup of the ADS phase
of our epoch-based algorithms
relative to the OpenMP baseline.
All algorithms are configured to check the
stopping condition after a fixed number of samples (see Appendix~\ref{app:latency} for details).
The number $F$ of SF pairs of shared-frame has been configured to $2$,
which we found to be a good setting for $T = 32$.
On 32 cores, local-frame and shared-frame 
achieve parallel speedups of $\numprint{\adsParallelSuLf}\times$
and $\numprint{\adsParallelSuSf}$; they both significantly improve upon the OpenMP baseline,
which can only achieve a parallel speedup of $\numprint{\adsParallelSuNaive}\times$
(\ie local-frame and shared-frame are $\numprint{\adsSuLfVsOmp}\times$
	and $\numprint{\adsSuSfVsOmp}\times$ faster, respectively;
	they also outperform the original implementation by factors
	of $\numprint{\adsSuLfVsOriginal}$ and $\numprint{\adsSuSfVsOriginal}$, respectively).
The difference between local-frame and
shared-frame is insignificant for lower numbers of cores;
this is explained by the fact that the reduced memory footprint of shared-frame
only improves performance once memory bandwidth becomes a bottleneck.
For the same reason, both algorithms scale very well until 16 cores;
due to memory bandwidth limitations, this nearly ideal scalability does not extend to 32 cores.
This bandwidth issue is known to affect graph
traversal algorithms in general~\cite{bader2005architectural,lumsdaine2007challenges}.

The indexed-frame algorithm is not as fast as local-frame
and shared-frame on the
instances depicted in Figure~\ref{fig:adaptive_su}: it achieves
a parallel speedup of $\numprint{\adsParallelSuDt}\times$ on 32 cores.
However, it is still considerably faster than the OpenMP baseline
(by a factor of $\numprint{\adsSuDtVsOmp}\times$).
There are two reasons why the determinism of indexed-frame is
costly: index-frame has similar bandwidth requirements
as local-frame; however, it has to allocate more memory as SFs are buffered for
longer periods of time. On the other hand, even when enough samples are collected,
the stopping condition has to be checked on older samples first, while
local-frame and shared-frame can just check the stopping condition on the
most recent sampling state.

In a final experiment, we evaluate the impact of the parameter $F$
of shared-frame on its performance. Figure~\ref{fig:mem_rt} depicts
the results.
The experiment is done with 36 cores; hence memory pressure is even
higher than in the previous experiments.
The figure demonstrates that in this situation,
minimizing the memory bandwith requirements at the expense
of synchronization overhead is a good strategy.
Hence for larger numbers of cores,
we can minimize memory footprint and maximize performance at the same time.

\npnoround

\section{Related Work}
\label{sec:related_work}
Our parallelization strategy can be applied to arbitrary
\adapt algorithms.
\adapt was first introduced by Lipton and Naughton to
estimate the size of the transitive closure of a digraph~\cite{lipton1989estimating}.
It is used in a variety of fields, \eg in statistical learning~\cite{provost1999efficient}.
In the context of \bc, \adapt has been used
to approximate distances between
pairs of vertices of a graph~\cite{oktay2011distance},
to approximate the \bc values of
a graph~\cite{bader2007approximating,riondato2018abra,borassi2016kadabra}
and to approximate the \bc value of a single vertex~\cite{chehreghani2018novel}.
An analogous strategy is exploited by Mumtaz and Wang~\cite{mumtaz2017identifying} to
find approximate solutions to the group betweenness maximization problem.

Regarding more general (\ie not necessarily \adapt) algorithms for \bc,
a survey from Matta \etal~\cite{matta2019comparing} provides a detailed
overview of the the state of the art.
The \rk~\cite{riondato2016fast} algorithm represents the leading non-adaptive
sampling algorithm for \bc approximation;
\kad was shown to be 100 times faster than \rk in undirected real-world graphs,
and 70 times faster
than \rk in directed graphs\cite{borassi2016kadabra}.
McLaughlin and Bader~\cite{mclaughlin2014scalable} introduced a work-efficient parallel
algorithm for \bc approximation, implemented for single- and multi-GPU machines.
Madduri \etal~\cite{madduri2009faster} presented a lock-free parallel algorithm
optimized for specific architectures to approximate or compute \bc exactly
in massive networks.

The SFs used by our algorithms are concurrent data structures that enable us to
minimize the synchronization latencies in multithread environments.
Devising concurrent (lock-free) data structures that scale over multiple cores 
is not trivial and much effort has been devoted to this goal~\cite{boyd2010analysis,michael2004hazard}.
A well-known solution is the Read-Copy-Update mechanism (RCU); it 
was introduced to achieve high multicore scalability on read-mostly data structures~\cite{mckenney1998read},
and was leveraged by several applications~\cite{arbel2014concurrent,clements2012scalable}.
Concurrent hash tables~\cite{DBLP:conf/sosp/DavidGT13} are another popular example.


\section{Conclusions and Future Work}
\label{sec:concl}
In this paper, we found that previous techniques to parallelize ADS algorithms
are insufficient to scale to large numbers of threads. However, significant speedups
can be achieved by employing adequate concurrent data structures.
Using such data structures and our epoch mechanism,
we were able to devise parallel ADS algorithms
that consistently outperform the state of the art
but also achieve different trade-offs between synchronization costs,
memory footprint and determinism of the results.

Regarding future work, a promising direction for our algorithms
is parallel computing with distributed memory; here, the stopping
condition could be checked via (asynchronous) reduction of the SFs.
In the case of \bc this, might yield a way to avoid
bottlenecks for memory bandwidth on shared-memory systems.


\bibliographystyle{splncs04}
\bibliography{biblio}

\newpage
\appendix
\section{The C11 memory model}
\label{app:c11_model}

As mentioned in Section~\ref{sub:basic-defs}, we work in the C11 memory model.
The weakest operations in this model are \texttt{load-relaxed} and \texttt{store-relaxed}
operations; those only guarantee the atomicity of the memory access
(\ie they guarantee that no \emph{tearing} occurs)
but no ordering at all. Hence, the order in which \texttt{store-relaxed}
writes become visible to \texttt{load-relaxed} reads can differ from the
order in which the stores and loads are performed by individual threads.
\texttt{load-acquire} and \texttt{store-release} additionally
do provide ordering guarantees:
if thread $t$ writes a word $X$ to a given memory location using \texttt{store-release}
\emph{and} thread $t'$ reads $X$ using \texttt{load-acquire} from the same memory location,
then all store operations -- whether atomic or not --
done by thread $t$ \emph{before} the store of $X$
become visible to all load operations done by thread $t'$ \emph{after} the load of $X$.
We note that C11 defines even stronger ordering guarantees that we do not require in
this paper.
Furthermore, on a hardware level, x86\_64 implements a stronger
\emph{total store order};
thus, \texttt{load-acquire} and \texttt{store-release} compile to plain
load and store instructions and our local-frame algorithm does not perform \emph{any}
synchronization instructions on x86\_64.


\section{Details of \kad Algorithm}
\label{app:kadabra}
In Section~\ref{sub:kadabra_algo} we described the \kad algorithm;
in this appendix, we illustrate the stopping condition more in detail and
show that evaluating it in a consistent state is crucial for the correctness
of the algorithm.
The functions $f$ and $g$ we mentioned in Eq.~\ref{eq:kad-inv} are
defined as~\cite{borassi2016kadabra}:
\begin{align*}
  f(\approxbc(v), \delta_{L}(v), \omega, \tau) &=
  \frac{1}{\tau}\left(\log\frac{1}{\delta_{L}(v)}\right)\left(
  \frac{1}{3} - \frac{\omega}{\tau}
  + \sqrt{\left(\frac{1}{3} - \frac{\omega}{\tau} \right)^2 +
  \frac{2\approxbc(v)\omega}{\log\frac{1}{\delta_{L}(v)}}}
  \right)\\
  g(\approxbc(v), \delta_{U}(v), \omega, \tau) &=
  \frac{1}{\tau}\left(\log\frac{1}{\delta_{U}(v)}\right)\left(
  \frac{1}{3} + \frac{\omega}{\tau}
  + \sqrt{\left(\frac{1}{3} + \frac{\omega}{\tau} \right)^2 +
  \frac{2\approxbc(v)\omega}{\log\frac{1}{\delta_{U}(v)}}}
  \right)
\end{align*}

where $\approxbc(v)$ is the approximation of the \bc of vertex $v$ obtained after
$\tau$ samples.
When the stopping condition is evaluated, $f$ and $g$ are computed for every
vertex of the graph and the algorithm terminates if:
\[
f(\approxbc(v), \delta_{L}(v), \omega, \tau) \le \epsilon
\quad \text{and} \quad
g(\approxbc(v), \delta_{U}(v), \omega, \tau) \le \epsilon
\]

holds for every vertex $v$ of the graph.
It is straightforward to verify that both $f$ and $g$ grow with $\approxbc(v)$
but that $g$ decreases with $\tau$.
Thus, evaluating the stopping condition with inconsistent data
(\eg if accesses to $\tau$ and $\approxbc(v)$ are not synchronized) could lead to an
erroneous termination of the algorithm.


\section{Optimization and Tuning}

\subsection{Improvements to the \kad Implementation}
\label{app:impr_seq}
In the following we document some improvements to the sequential \kad implementation
of Borassi and Natale~\cite{borassi2016kadabra}.
First, we avoid searching for non-existing shortest paths
between a pair $(s, t)$ of selected vertices
by checking if $s$ and $t$ belong to the same connect
component\footnote{Connected
	components are computed along with the diameter during preprocessing.}.
Then, we reduce the memory footprint of the sampling procedure:
the original \kad implementation
stores all predecessors on shortest paths in a separate graph $G'$, which is
used to backtrack the path starting from the last explored vertices.
Our implementation avoids the use of $G'$ by
reconstructing shortest $s$-$t$-paths from the original graph $G$
and a distance vector.
Furthermore, for each shortest $s$-$t$-path sampled,
the original \kad implementation needs to reset a Boolean
\enquote{visited} array
with an overall additional cost of
$\Theta(|V|)$ time per sample.
We avoid doing this by using 7 bits per element in this array to store a 
\emph{timestamp} that indicates when the vertex was last visited; therefore,
the array needs to be reset only once in $2^7 = 128$ BFSs.

\subsection{Balancing Costs of Termination Checks}
\label{app:tuning}
Although the pseudocode of Algorithms~\ref{algo:adaptive} and~\ref{algo:epoch-based}
checks the stopping condition
after every sample, this amount of checking is excessive in practice.
Hence, both the original \kad and the OpenMP ADS algorithms
check the stopping condition after a fixed number $N$ of samples.
$N$ represents a tradeoff between the time required to check the
stopping condition and the time required to sample a shortest path.
In the original \kad implementation, $N$ is set to 11;
however, this choice turned out to be inefficient in our experiments.
Thus, we formed a small set of the instances for parameter
tuning\footnote{We chose the instances com-amazon, munmun\_twitter\_social,
orkut-links, roadNet-PA, wikipedia\_link\_de, and wikipedia\_link\_fr.}
and ran experiments with different values of $N$. 
As the result, we found that $N = 1000$ empirically performs best.

\subsection{Termination Latency in Epoch-based Approach}
\label{app:latency}
In the epoch-based approach, we also need to balance the frequency
of checking the stopping condition and the time invested into sampling; however,
we face a different problem: the accumulation of all SFs before
the stopping condition is checked takes $\Oh(Tn)$ time, thus the length
of an epoch depends on $T$ (see Section~\ref{sub:sync-costs}).
This is an undesirable artifact as it introduces an
additional delay between the time when the algorithm could potentially stop
(because enough samples have been collected)
and the time when the algorithm actually stops
(because the accumulation is completed).
It would be preferable to check the stopping condition after
a constant number of samples (summed over all threads)
-- as the sequential and OpenMP variants
naturally do.\footnote{As a side effect, doing so
	improves the comparability of those algorithms.}

While it seems unlikely that a constant number of samples per epoch can
be achieved (without additional synchronization overhead),
we aim to satisfy this property heuristically.
Checking the stopping condition after $N_0 = (1/T) N$
samples \emph{per thread} seems to be a reasonable heuristic.
However, it does not account for the fact that only one thread performs
the check while all additional threads continue to sample data.
Thus, we check the stopping condition after
\[
  N_0 = \frac1{T^\xi} N
\]
samples from thread 0.
Here, $\xi$ is another parameter that can be tuned.
Using the same approach as in Appendix~\ref{app:tuning}
(and running the algorithm on 32 cores),
we empirically determined $\xi = \log_{32}(N/10) \approx 1.33$
to be a good choice.


\section{Details on Epoch-Based Algorithms}

\subsection{Visualization of Epoch Transition}
\label{app:epoch-transition}

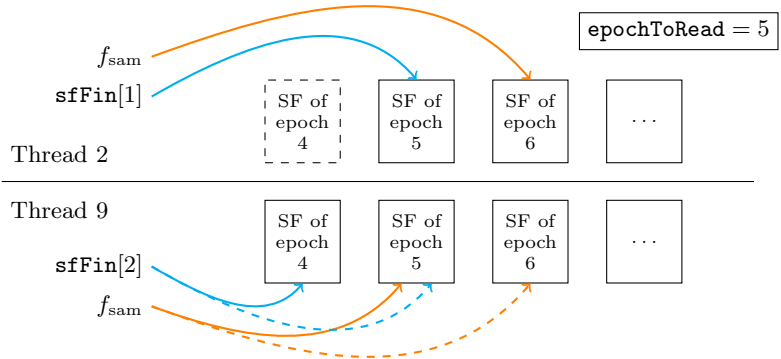
\begin{figure}[t]
\centering
\begin{tikzpicture}[sfstyle/.style={
	draw,rectangle,minimum width=1cm,minimum height=1.1cm,align=center,font=\scriptsize
}]
	\clip (-4,-2.4) rectangle (7,2.4);
	\node[draw,rectangle] at (5,2) {$\mathtt{epochToRead} = 5$};

	\draw (-4,0) -- (6,0);
	\node[above right=0.15cm and 0cm] at (-4,0) {Thread 2};
	\node[below right=0.15cm and 0cm] at (-4,0) {Thread 9};

	\node[sfstyle,dashed] at (0,0.8) (sf00) {SF of\\epoch\\4};
	\node[sfstyle,right=0.5cm of sf00] (sf01) {SF of\\epoch\\5};
	\node[sfstyle,right=0.5cm of sf01] (sf02) {SF of\\epoch\\6};
	\node[sfstyle,right=0.5cm of sf02] (sf03) {$\ldots$};
	\node[text width=1.8cm,align=right,above left=-0.5cm and 1.5cm of sf00] (efin0) {$\mathtt{sfFin}[1]$};
	\node[text width=1.8cm,align=right,above=0cm of efin0] (csf0) {$f_\mathrm{sam}$};
	\draw[->,cyan,thick] (efin0.east) to[out=30,in=130] (sf01.north);
	\draw[->,orange,thick] (csf0.east) to[out=20,in=130] (sf02.north);

	\node[sfstyle] at (0,-0.8) (sf10) {SF of\\epoch\\4};
	\node[sfstyle,right=0.5cm of sf10] (sf11) {SF of\\epoch\\5};
	\node[sfstyle,right=0.5cm of sf11] (sf12) {SF of\\epoch\\6};
	\node[sfstyle,right=0.5cm of sf12] (sf13) {$\ldots$};
	\node[text width=1.8cm,align=right,below left=-0.5cm and 1.5cm of sf10] (efin1) {$\mathtt{sfFin}[2]$};
	\node[text width=1.8cm,align=right,below=0cm of efin1] (csf1) {$f_\mathrm{sam}$};
	\draw[->,cyan,thick] (efin1.east) to[out=-30,in=-130] (sf10.south);
	\draw[->,orange,thick] (csf1.east) to[out=-20,in=-130] ([xshift=-0.2cm]sf11.south);
	\draw[->,cyan,dashed,thick] (efin1.east) to[out=-30,in=-130] ([xshift=0.2cm]sf11.south);
	\draw[->,orange,dashed,thick] (csf1.east) to[out=-20,in=-130] (sf12.south);
\end{tikzpicture}
\caption{Transition after $\mathtt{epochToRead}$ is set to $5$. Thread 2 already
	writes to the SF of epoch~6 (using the $f_\mathrm{sam}$ pointer).
	Thread 9 still writes to the SF of epoch~5 but
	advances to epoch~6 once it checks $\mathtt{epochToRead}$ (dashed orange line).
	Afterwards, thread 9 publishes its SF of epoch~5 to $\mathtt{sfFin}$ (dashed blue line).
	Finally, the stopping condition is checked using both SFs of epoch~5
	(\ie the SFs now pointed to by $\mathtt{sfFin}$).}
\label{fig:advance}
\end{figure}

Figure~\ref{fig:advance} visualizes the update of the $\mathtt{sfFin}$ pointers
after an epoch transition is initiated by incrementing $\mathtt{epochToRead}$.
The exact mechanism is discussed in Section~\ref{sub:epoch_based}


\subsection{Indexed-frame Algorithm}
\label{app:deterministic}

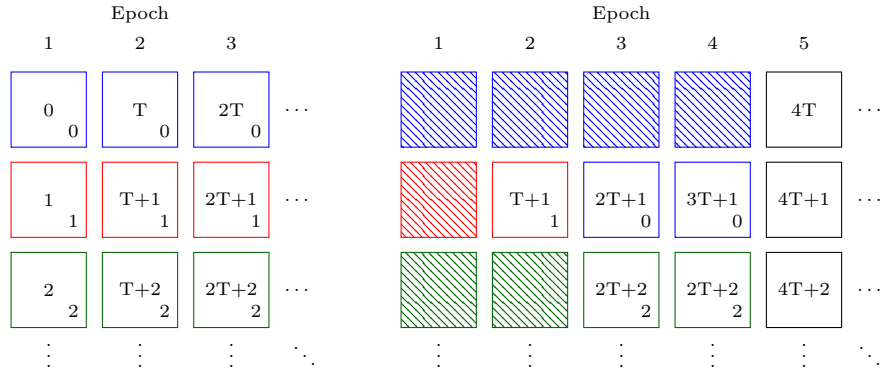
\begin{figure}[t]
\centering
\begin{subfigure}[t]{0.375\textwidth}
\scriptsize
  \begin{tikzpicture}[sfstyle/.style={
	draw,rectangle,minimum width=1cm,minimum height=1cm,align=center,font=\scriptsize
}]
	\node[sfstyle,draw=blue] at (0,0) (t0) {0};
	\node[sfstyle,right=0.2cm of t0,draw=blue] (t1) {T};
	\node[sfstyle,right=0.2cm of t1,draw=blue] (t2) {2T};
	\node[right=0.1cm of t2] (td2) {\dots};

	\node[sfstyle,draw=red] at (0,-1.2) (t10) {1};
	\node[sfstyle,right=0.2cm of t10,draw=red] (t11) {T+1};
	\node[sfstyle,right=0.2cm of t11,draw=red] (t12) {2T+1};
	\node[right=0.1cm of t12] (td3) {\dots};
	
	\node[sfstyle,draw=\darkgreen] at (0, -2.4) (t20) {2};
	\node[sfstyle,right=0.2cm of t20,draw=\darkgreen] (t21) {T+2};
	\node[sfstyle,right=0.2cm of t21,draw=\darkgreen] (t22) {2T+2};
	\node[right=0.1cm of t22] (td4) {\dots};
	
	\node[below=-0.1cm of t20] (td1) {\vdots};
	\node[below=-0.1cm of t21] (td11) {\vdots};
	\node[below=-0.1cm of t22] (td12) {\vdots};
	\node[below right=-0.1cm and 0.2cm of t22] (tdd) {$\ddots$};
	
	\node[above=0.55cm of t1] (nEp) {Epoch};
	
	\node[below right=-0.4cm and -0.35cm of t0] (th0) {0};
	\node[below right=-0.4cm and -0.35cm of t1] (th0) {0};
	\node[below right=-0.4cm and -0.35cm of t2] (th0) {0};
	
	\node[below right=-0.4cm and -0.35cm of t10] (th0) {1};
	\node[below right=-0.4cm and -0.35cm of t11] (th0) {1};
	\node[below right=-0.4cm and -0.35cm of t12] (th0) {1};
	
    \node[below right=-0.4cm and -0.35cm of t20] (th0) {2};
	\node[below right=-0.4cm and -0.35cm of t21] (th0) {2};
	\node[below right=-0.4cm and -0.35cm of t22] (th0) {2};
	
	\node[above=0.2cm of t0] (e1) {1};
	\node[above=0.2cm of t1] (e2) {2};
	\node[above=0.2cm of t2] (e3) {3};
\end{tikzpicture}
\caption{SF indices in indexed-frame algorithm (not to be confused with epoch numbers).}
\label{fig:threads_epochs}
\end{subfigure}%
\hspace{0.05\textwidth}%
\begin{subfigure}[t]{0.525\textwidth}
\scriptsize
\begin{tikzpicture}[sfstyle/.style={
	draw,rectangle,minimum width=1cm,minimum height=1cm,align=center
}]

	\node[sfstyle,pattern=north west lines, pattern color=blue,draw=blue] at (0,0) (t0) {};
	\node[sfstyle,right=0.2cm of t0,pattern=north west lines, pattern color=blue,draw=blue] (t1) {};
	\node[sfstyle,right=0.2cm of t1,pattern=north west lines, pattern color=blue,draw=blue] (t2) {};
	\node[sfstyle,right=0.2cm of t2,pattern=north west lines, pattern color=blue,draw=blue] (t3) {};
	\node[sfstyle,right=0.2cm of t3] (t4) {4T};
	\node[right=0.1cm of t4] (td2) {\dots};
	\node[above=0.2cm of t0] (e0) {1};
	\node[above=0.2cm of t1] (e1) {2};
	\node[above=0.2cm of t2] (e2) {3};
	\node[above=0.2cm of t3] (e3) {4};
	\node[above=0.2cm of t4] (e4) {5};
	\node[above=.55cm of t2] (tEpoch) {Epoch};

	\node[sfstyle,pattern=north west lines, pattern color=red,draw=red] at (0,-1.2) (t10) {};
	\node[sfstyle,right=0.2cm of t10,draw=red] (t11) {T+1};
	\node[below right=-0.4cm and -0.35cm of t11] (th11) {1};
	\node[sfstyle,right=0.2cm of t11,draw=blue] (t12) {2T+1};
	\node[below right=-0.4cm and -0.35cm of t12] (th12) {0};
	\node[sfstyle,right=0.2cm of t12,draw=blue] (t13) {3T+1};
	\node[below right=-0.4cm and -0.35cm of t13] (th13) {0};
	\node[sfstyle,right=0.2cm of t13] (t14) {4T+1};
	\node[right=0.1cm of t14] (td22) {\dots};
	
	\node[sfstyle,pattern=north west lines, pattern color=\darkgreen,draw=\darkgreen] at (0, -2.4) (t20) {};
	\node[sfstyle,right=0.2cm of t20,pattern=north west lines, pattern color=\darkgreen,draw=\darkgreen] (t21) {};
	\node[sfstyle,right=0.2 of t21,draw=\darkgreen] (t22) {2T+2};
	\node[sfstyle,right=0.2 of t22,draw=\darkgreen] (t23) {2T+2};
	\node[sfstyle,right=0.2 of t23] (t24) {4T+2};
	\node[below right=-0.4cm and -0.35cm of t22] {2};
	\node[below right=-0.4cm and -0.35cm of t23] {2};
	\node[right=0.1cm of t24] (td5) {\dots};
	\node[below=-0.1cm of t20] (td1) {\vdots};
	\node[below=-0.1cm of t21] (td21) {\vdots};
	\node[below=-0.1cm of t22] (td22) {\vdots};
	\node[below=-0.1cm of t23] (td23) {\vdots};
	\node[below=-0.1cm of t24] (td24) {\vdots};
	\node[below right=-0.1cm and 0.1cm of t24] (td25) {$\ddots$};
\end{tikzpicture}
\caption{Reservation of SFs for $a = 2$.
	Thread 0 (blue) has
	finished SFs with indices $T$, $2T$, $3T$ and $4T$.
	Because thread 1 (red)
	did not finish SF $T+1$ yet,
	thread 0 reserves indices $2T+1$ and $3T+1$.}
\label{fig:opt_epochs}
\end{subfigure}
\caption{Indices of SFs in indexed-frame algorithm.
	Central numbers indicate SF indices.
	Numbers in bottom right corners (and colors) denote the thread
	that will compute the SF.
	Dashed SFs are already finished.}
\label{fig:grid}
\end{figure}

In this subsection, we introduce the \emph{indexed-frame} algorithm
that is a variant of local-frame but always obtains deterministic results.
In particular, we highlight the modifications compared to
local-frame that are necessary to avoid non-determinism.

There are two sources of non-determinism in the epoch-based algorithms:
First, because threads generate random numbers
independently from each other and the pseudo-random number generator (PRNG)
of each thread is seeded
differently, the sequence of generated random numbers depends on the
number of threads.
Secondly, and more importantly, the point in time where a thread notices
that the stopping condition needs to be checked
(\ie $\mathtt{epochToRead}$ is read in line~\ref{line:advance-epoch}
of Algorithm~\ref{algo:epoch-based})
is non-deterministic.
Thus, among multiple executions of the algorithm,
the SFs that are checked differ
in the number of samples and in the PRNG state
used to generate the samples.

Indexed-frame avoids the first problem
by re-seeding the random number generator
of each thread whenever the thread moves to a new epoch.
To avoid a dependence on the number of threads,
the new seed should only vary based on a unique \emph{index} of the
generated SF (not to be confused with the epoch number).
As an index for the SF of epoch $e$, we choose
$(e T + t)$,
as every thread $t$ contributes exactly one SF to each epoch $e$.
This scheme is depicted in Figure~\ref{fig:threads_epochs}.

Handling the second issue turns out to be more involved.
As we need to ensure that the stopping condition is always checked on exactly
the same SFs, the point in time where a thread
moves to a new epoch must be independent of the time when
the stopping condition is checked. To achieve that,
indexed-frame writes a fixed number of samples to each SF.
That, however, means that
by the time a check is performed, a thread can have
finished multiple SFs.
To deal with multiple finished
SFs, we use
a per-thread queue of SFs which have already been finished
but which were not considered by the stopping condition yet.
While the size of this queue is unbounded in theory,
in our experiments we never observed a thread buffering more than
12 SFs at a time (with an average of 3 SFs allocated per thread); thus, we do not implement
a sophisticated strategy to bound the queue length.
The following subsection, however, discusses such a strategy
for ADS algorithms where this becomes a problem.

\subsection{Bounded Memory Complexity in Indexed-frame}
As our experiments demonstrated, the SF buffering overhead
of the deterministic algorithm is not problematic in practice.
However, at the cost of additional synchronization,
it is possible to also bound the theoretical memory
complexity the algorithm. In particular,
if there are lower and upper bounds $\mathcal{C}_\ell$ and $\mathcal{C}_u$
on the time to compute a single SF,
we can \emph{reserve} SF indices to bound the number of simultaneously allocated
SFs\footnote{Such bounds trivially exist if the algorithmic complexity
	of a single sampling operation is bounded.}:
instead of computing the SFs with indices $(t, t + T, t + 2T, \ldots)$,
each thread $t$ determines (by synchronizing with all other threads) the
smallest index $i$ of a SF that is not reserved yet
(see Figure~\ref{fig:opt_epochs} as an illustration of this process).
Then, the SFs with indices $(i, i + T, i + 2T, \ldots, i + aT)$
for some constant $a$ are reserved by $t$ and $t$ computes
exactly those SFs before doing another reservation.
The bounds on the computation time of a single SF imply that
all other threads can only perform a constant number
$\mathcal{C}_u/\mathcal{C}_\ell$
of reservations
until an epoch is finished (and all SFs of the epoch can be reclaimed).
The constant $a$ can be chosen to balance the maximal number of buffered SFs
and synchronization costs required for reservation.


\newpage
\section{List of Instances}
\label{app:instances}

\begin{table}
\centering
\label{table:input_instances}
\caption{List of instances used for the experiments.}
\bigskip
\scriptsize
\begin{tabular}{lrrrr}
Network name & \# of vertices & \# of edges & Diameter & Category \\
\midrule
\midrule
tntp-ChicagoRegional & \numprint{12979} &
			\numprint{20627} & \numprint{106} & Infrastructure \\
dimacs9-NY & \numprint{264346} &
			\numprint{365050} & \numprint{720} & Infrastructure \\
dimacs9-COL & \numprint{435666} &
			\numprint{521200} & \numprint{1255} & Infrastructure \\
munmun\_twitter\_social & \numprint{465017} &
			\numprint{833540} & \numprint{8} & Social \\
com-amazon & \numprint{334863} &
			\numprint{925872} & \numprint{47} & Co-purchase \\
loc-gowalla\_edges & \numprint{196591} &
			\numprint{950327} & \numprint{16} & Social \\
web-NotreDame & \numprint{325729} &
			\numprint{1090108} & \numprint{46} & Hyperlink \\
roadNet-PA & \numprint{1088092} &
			\numprint{1541898} & \numprint{794} & Infrastructure \\
roadNet-TX & \numprint{1379917} &
			\numprint{1921660} & \numprint{1064} & Infrastructure \\
web-Stanford & \numprint{281903} &
			\numprint{1992636} & \numprint{753} & Hyperlink \\
petster-dog-household & \numprint{256127} &
			\numprint{2148179} & \numprint{11} & Social \\
flixster & \numprint{2523386} &
			\numprint{7918801} & \numprint{8} & Social \\
as-skitter & \numprint{1696415} &
			\numprint{11095298} & \numprint{31} & Computer \\
dbpedia-all & \numprint{3966895} &
			\numprint{12610982} & \numprint{146} & Relationship \\
actor-collaboration & \numprint{382219} &
			\numprint{15038083} & \numprint{13} & Collaboration \\
soc-pokec-relationships & \numprint{1632803} &
			\numprint{22301964} & \numprint{14} & Social \\
soc-LiveJournal1 & \numprint{4846609} &
			\numprint{42851237} & \numprint{20} & Social \\
livejournal-links & \numprint{5204175} &
			\numprint{48709621} & \numprint{23} & Social \\
wikipedia\_link\_ceb & \numprint{7891015} &
			\numprint{63915385} & \numprint{9} & Hyperlink \\
wikipedia\_link\_ru & \numprint{3370462} &
			\numprint{71950918} & \numprint{10} & Hyperlink \\
wikipedia\_link\_sh & \numprint{3924218} &
			\numprint{76439386} & \numprint{9} & Hyperlink \\
wikipedia\_link\_de & \numprint{3603726} &
			\numprint{77546982} & \numprint{14} & Hyperlink \\
wikipedia\_link\_it & \numprint{2148791} &
			\numprint{77875131} & \numprint{9} & Hyperlink \\
wikipedia\_link\_sv & \numprint{6100692} &
			\numprint{99864874} & \numprint{10} & Hyperlink \\
wikipedia\_link\_fr & \numprint{3333397} &
			\numprint{100461905} & \numprint{10} & Hyperlink \\
wikipedia\_link\_sr & \numprint{3175009} &
			\numprint{103310837} & \numprint{10} & Hyperlink \\
orkut-links & \numprint{3072441} &
			\numprint{117184899} & \numprint{10} & Social \\
\end{tabular}

\end{table}


\newpage
\section{Detailed Experimental Data}
\label{app:raw-data}

In this appendix we show the detailed running time of our algorithms on every
instance.
For better readability, we partitioned the instances into two categories:
moderate instances achieved a total running time of less than
100 seconds (Table~\ref{tab:moderate_rt});
the others are shown in Table~\ref{tab:expensive_rt}.

\begin{table}
\caption{Absolute running times (s) on moderate instances.
	Total: \adapt with preprocessing on a single core.}
\label{tab:moderate_rt}
\bigskip

\begin{subtable}{\textwidth}
	\caption{OMP: OpenMP baseline,
		L: local-frame}
	\label{tab:sub_rt11}
	\bigskip

	\centering
	\scriptsize
	\begin{tabular}{lr | rr | rr | rr | rr | rr | rr}
Network Name &  & \multicolumn{2}{c|}{1 core} & \multicolumn{2}{c|}{2 cores} & \multicolumn{2}{c|}{4 cores} & \multicolumn{2}{c|}{8 cores} & \multicolumn{2}{c|}{16 cores} & \multicolumn{2}{c}{32 cores}\\
 & Total & OMP & L & OMP & L & OMP & L & OMP & L & OMP & L & OMP & L\\
\midrule\midrule
tntp-ChicagoRegional & \numprint{6.70} & \numprint{6.62} & \numprint{5.66} & \numprint{3.25} & \numprint{2.83} & \numprint{1.56} & \numprint{1.37} & \numprint{0.85} & \numprint{0.66} & \numprint{0.45} & \numprint{0.33} & \numprint{0.27} & \numprint{0.16}\\
munmun\_twitter\_social & \numprint{7.99} & \numprint{1.72} & \numprint{1.49} & \numprint{1.41} & \numprint{0.83} & \numprint{1.09} & \numprint{0.45} & \numprint{0.89} & \numprint{0.24} & \numprint{0.84} & \numprint{0.23} & \numprint{0.78} & \numprint{0.17}\\
com-amazon & \numprint{10.49} & \numprint{9.47} & \numprint{9.18} & \numprint{4.47} & \numprint{4.38} & \numprint{3.02} & \numprint{2.35} & \numprint{2.27} & \numprint{1.34} & \numprint{1.94} & \numprint{0.86} & \numprint{1.41} & \numprint{0.54}\\
loc-gowalla\_edges & \numprint{2.82} & \numprint{2.50} & \numprint{2.09} & \numprint{1.49} & \numprint{0.99} & \numprint{1.11} & \numprint{0.49} & \numprint{0.87} & \numprint{0.20} & \numprint{0.70} & \numprint{0.11} & \numprint{0.67} & \numprint{0.10}\\
web-NotreDame & \numprint{7.66} & \numprint{7.33} & \numprint{6.55} & \numprint{4.34} & \numprint{3.30} & \numprint{3.17} & \numprint{1.72} & \numprint{2.50} & \numprint{0.68} & \numprint{2.14} & \numprint{0.43} & \numprint{1.93} & \numprint{0.33}\\
web-Stanford & \numprint{34.62} & \numprint{33.87} & \numprint{29.95} & \numprint{15.76} & \numprint{15.54} & \numprint{11.62} & \numprint{7.95} & \numprint{7.96} & \numprint{2.79} & \numprint{5.49} & \numprint{1.75} & \numprint{4.48} & \numprint{1.33}\\
petster-dog-household & \numprint{5.31} & \numprint{4.83} & \numprint{3.89} & \numprint{2.67} & \numprint{2.12} & \numprint{1.82} & \numprint{1.09} & \numprint{1.43} & \numprint{0.67} & \numprint{1.30} & \numprint{0.56} & \numprint{1.32} & \numprint{0.42}\\
flixster & \numprint{13.99} & \numprint{10.94} & \numprint{10.03} & \numprint{7.87} & \numprint{5.77} & \numprint{6.61} & \numprint{3.20} & \numprint{5.49} & \numprint{1.91} & \numprint{4.90} & \numprint{1.32} & \numprint{4.78} & \numprint{1.32}\\
as-skitter & \numprint{17.14} & \numprint{13.76} & \numprint{13.16} & \numprint{9.85} & \numprint{7.57} & \numprint{7.33} & \numprint{3.99} & \numprint{5.80} & \numprint{2.55} & \numprint{5.14} & \numprint{1.77} & \numprint{5.11} & \numprint{2.21}\\
actor-collaboration & \numprint{8.69} & \numprint{5.87} & \numprint{6.21} & \numprint{3.88} & \numprint{3.18} & \numprint{2.60} & \numprint{1.96} & \numprint{1.82} & \numprint{1.16} & \numprint{1.41} & \numprint{0.68} & \numprint{1.09} & \numprint{0.54}\\
soc-pokec-relationships & \numprint{25.38} & \numprint{16.57} & \numprint{18.21} & \numprint{10.37} & \numprint{9.00} & \numprint{8.00} & \numprint{5.23} & \numprint{6.07} & \numprint{3.02} & \numprint{5.28} & \numprint{2.56} & \numprint{5.40} & \numprint{2.08}\\
soc-LiveJournal1 & \numprint{54.91} & \numprint{36.52} & \numprint{39.08} & \numprint{31.53} & \numprint{22.37} & \numprint{22.29} & \numprint{11.68} & \numprint{17.79} & \numprint{6.12} & \numprint{15.57} & \numprint{4.69} & \numprint{14.82} & \numprint{4.03}\\
livejournal-links & \numprint{62.27} & \numprint{46.19} & \numprint{44.52} & \numprint{31.16} & \numprint{24.99} & \numprint{23.49} & \numprint{13.43} & \numprint{18.11} & \numprint{7.57} & \numprint{15.46} & \numprint{4.90} & \numprint{15.51} & \numprint{4.33}\\
wikipedia\_link\_sh & \numprint{41.54} & \numprint{21.68} & \numprint{17.98} & \numprint{17.43} & \numprint{9.49} & \numprint{14.74} & \numprint{4.68} & \numprint{13.13} & \numprint{2.44} & \numprint{12.36} & \numprint{2.05} & \numprint{12.08} & \numprint{2.11}\\
wikipedia\_link\_sr & \numprint{56.30} & \numprint{45.55} & \numprint{42.66} & \numprint{32.21} & \numprint{21.83} & \numprint{20.28} & \numprint{10.69} & \numprint{15.63} & \numprint{6.08} & \numprint{12.92} & \numprint{3.72} & \numprint{12.73} & \numprint{2.69}\\
\end{tabular}

\end{subtable}
\begin{subtable}{\textwidth}
	\caption{S: shared-frame,
		I: indexed-frame}
	\label{tab:sub_rt21}
	\bigskip

	\centering
	\scriptsize
	\begin{tabular}{lr | rr | rr | rr | rr | rr | rr}
Network Name &  & \multicolumn{2}{c|}{1 core} & \multicolumn{2}{c|}{2 cores} & \multicolumn{2}{c|}{4 cores} & \multicolumn{2}{c|}{8 cores} & \multicolumn{2}{c|}{16 cores} & \multicolumn{2}{c}{32 cores}\\
 & Total & S & I & S & I & S & I & S & I & S & I & S & I\\
\midrule\midrule
tntp-ChicagoRegional & \numprint{6.70} & \numprint{6.71} & \numprint{5.48} & \numprint{3.30} & \numprint{2.75} & \numprint{1.48} & \numprint{1.38} & \numprint{0.64} & \numprint{0.70} & \numprint{0.31} & \numprint{0.43} & \numprint{0.15} & \numprint{0.29}\\
munmun\_twitter\_social & \numprint{7.99} & \numprint{1.51} & \numprint{1.60} & \numprint{0.80} & \numprint{0.90} & \numprint{0.45} & \numprint{0.49} & \numprint{0.28} & \numprint{0.29} & \numprint{0.20} & \numprint{0.17} & \numprint{0.19} & \numprint{0.23}\\
com-amazon & \numprint{10.49} & \numprint{8.44} & \numprint{8.88} & \numprint{4.04} & \numprint{4.21} & \numprint{2.34} & \numprint{2.52} & \numprint{1.43} & \numprint{1.48} & \numprint{0.78} & \numprint{1.22} & \numprint{0.42} & \numprint{0.89}\\
loc-gowalla\_edges & \numprint{2.82} & \numprint{2.23} & \numprint{2.34} & \numprint{0.79} & \numprint{1.11} & \numprint{0.46} & \numprint{0.53} & \numprint{0.28} & \numprint{0.31} & \numprint{0.14} & \numprint{0.17} & \numprint{0.08} & \numprint{0.11}\\
web-NotreDame & \numprint{7.66} & \numprint{6.34} & \numprint{6.81} & \numprint{3.18} & \numprint{3.52} & \numprint{1.40} & \numprint{1.58} & \numprint{0.76} & \numprint{0.99} & \numprint{0.52} & \numprint{0.66} & \numprint{0.22} & \numprint{0.60}\\
web-Stanford & \numprint{34.62} & \numprint{28.81} & \numprint{29.51} & \numprint{14.27} & \numprint{14.01} & \numprint{5.73} & \numprint{8.68} & \numprint{4.41} & \numprint{5.50} & \numprint{1.80} & \numprint{2.13} & \numprint{1.18} & \numprint{1.50}\\
petster-dog-household & \numprint{5.31} & \numprint{4.24} & \numprint{3.68} & \numprint{2.21} & \numprint{1.95} & \numprint{1.05} & \numprint{1.14} & \numprint{0.66} & \numprint{0.75} & \numprint{0.62} & \numprint{0.78} & \numprint{0.46} & \numprint{0.73}\\
flixster & \numprint{13.99} & \numprint{10.30} & \numprint{11.02} & \numprint{6.03} & \numprint{6.00} & \numprint{3.01} & \numprint{3.31} & \numprint{2.25} & \numprint{1.97} & \numprint{1.36} & \numprint{1.45} & \numprint{1.33} & \numprint{1.90}\\
as-skitter & \numprint{17.14} & \numprint{13.26} & \numprint{14.23} & \numprint{7.59} & \numprint{8.13} & \numprint{4.11} & \numprint{4.42} & \numprint{2.42} & \numprint{2.44} & \numprint{1.50} & \numprint{2.08} & \numprint{1.13} & \numprint{1.74}\\
actor-collaboration & \numprint{8.69} & \numprint{6.28} & \numprint{5.48} & \numprint{3.28} & \numprint{3.27} & \numprint{1.85} & \numprint{1.71} & \numprint{1.06} & \numprint{1.03} & \numprint{0.68} & \numprint{0.69} & \numprint{0.48} & \numprint{1.01}\\
soc-pokec-relationships & \numprint{25.38} & \numprint{16.22} & \numprint{17.18} & \numprint{8.77} & \numprint{9.38} & \numprint{5.57} & \numprint{5.79} & \numprint{3.40} & \numprint{2.98} & \numprint{2.15} & \numprint{2.00} & \numprint{1.40} & \numprint{2.49}\\
soc-LiveJournal1 & \numprint{54.91} & \numprint{35.60} & \numprint{40.49} & \numprint{22.70} & \numprint{18.93} & \numprint{13.13} & \numprint{14.15} & \numprint{7.04} & \numprint{8.28} & \numprint{4.69} & \numprint{6.32} & \numprint{3.29} & \numprint{5.72}\\
livejournal-links & \numprint{62.27} & \numprint{44.49} & \numprint{44.85} & \numprint{25.44} & \numprint{24.65} & \numprint{12.45} & \numprint{15.29} & \numprint{7.69} & \numprint{8.53} & \numprint{5.07} & \numprint{6.89} & \numprint{3.77} & \numprint{6.95}\\
wikipedia\_link\_sh & \numprint{41.54} & \numprint{17.94} & \numprint{21.49} & \numprint{9.81} & \numprint{11.86} & \numprint{4.89} & \numprint{7.04} & \numprint{2.78} & \numprint{3.88} & \numprint{1.96} & \numprint{1.68} & \numprint{1.55} & \numprint{2.26}\\
wikipedia\_link\_sr & \numprint{56.30} & \numprint{45.93} & \numprint{41.54} & \numprint{24.91} & \numprint{23.70} & \numprint{11.36} & \numprint{13.18} & \numprint{7.09} & \numprint{6.54} & \numprint{4.44} & \numprint{5.11} & \numprint{2.95} & \numprint{5.23}\\
\end{tabular}

\end{subtable}
\end{table}

\begin{table}
\caption{Absolute running times (s) on expensive instances.
	Total: \adapt with preprocessing on a single core.}
\label{tab:expensive_rt} 
\bigskip

\begin{subtable}{\textwidth}
	\caption{OMP: OpenMP baseline,
		L: local-frame}\bigskip
	\label{tab:sub_rt1}
	
	\centering
	\scriptsize
	\begin{tabular}{lr | rr | rr | rr | rr | rr | rr}
Network Name &  & \multicolumn{2}{c|}{1 core} & \multicolumn{2}{c|}{2 cores} & \multicolumn{2}{c|}{4 cores} & \multicolumn{2}{c|}{8 cores} & \multicolumn{2}{c|}{16 cores} & \multicolumn{2}{c}{32 cores}\\
 & Total & OMP & L & OMP & L & OMP & L & OMP & L & OMP & L & OMP & L\\
\midrule\midrule
dimacs9-NY & \numprint{249}  & \numprint{246} & \numprint{212} & \numprint{97} & \numprint{106} & \numprint{54} & \numprint{55} & \numprint{30} & \numprint{28} & \numprint{19} & \numprint{14} & \numprint{10} & \numprint{6}\\
dimacs9-COL & \numprint{405}  & \numprint{397} & \numprint{358} & \numprint{177} & \numprint{177} & \numprint{101} & \numprint{94} & \numprint{55} & \numprint{47} & \numprint{30} & \numprint{23} & \numprint{17} & \numprint{11}\\
roadNet-PA & \numprint{1961}  & \numprint{1937} & \numprint{1851} & \numprint{1027} & \numprint{942} & \numprint{521} & \numprint{458} & \numprint{284} & \numprint{235} & \numprint{148} & \numprint{121} & \numprint{92} & \numprint{59}\\
roadNet-TX & \numprint{1965}  & \numprint{1937} & \numprint{2001} & \numprint{1042} & \numprint{1035} & \numprint{544} & \numprint{496} & \numprint{279} & \numprint{250} & \numprint{165} & \numprint{130} & \numprint{89} & \numprint{64}\\
dbpedia-all & \numprint{412}  & \numprint{402} & \numprint{395} & \numprint{227} & \numprint{215} & \numprint{126} & \numprint{80} & \numprint{76} & \numprint{42} & \numprint{54} & \numprint{22} & \numprint{41} & \numprint{19}\\
wikipedia\_link\_ceb & \numprint{1337}  & \numprint{1272} & \numprint{1435} & \numprint{701} & \numprint{707} & \numprint{415} & \numprint{307} & \numprint{238} & \numprint{160} & \numprint{156} & \numprint{98} & \numprint{121} & \numprint{74}\\
wikipedia\_link\_ru & \numprint{142}  & \numprint{126} & \numprint{132} & \numprint{89} & \numprint{73} & \numprint{52} & \numprint{44} & \numprint{36} & \numprint{23} & \numprint{26} & \numprint{12} & \numprint{24} & \numprint{12}\\
wikipedia\_link\_de & \numprint{155}  & \numprint{145} & \numprint{182} & \numprint{106} & \numprint{100} & \numprint{54} & \numprint{57} & \numprint{37} & \numprint{21} & \numprint{25} & \numprint{12} & \numprint{20} & \numprint{13}\\
wikipedia\_link\_it & \numprint{152}  & \numprint{112} & \numprint{145} & \numprint{82} & \numprint{70} & \numprint{58} & \numprint{41} & \numprint{27} & \numprint{24} & \numprint{20} & \numprint{14} & \numprint{16} & \numprint{9}\\
wikipedia\_link\_sv & \numprint{444}  & \numprint{423} & \numprint{472} & \numprint{258} & \numprint{255} & \numprint{160} & \numprint{105} & \numprint{105} & \numprint{61} & \numprint{73} & \numprint{32} & \numprint{60} & \numprint{31}\\
wikipedia\_link\_fr & \numprint{194}  & \numprint{168} & \numprint{177} & \numprint{131} & \numprint{106} & \numprint{75} & \numprint{58} & \numprint{41} & \numprint{30} & \numprint{32} & \numprint{14} & \numprint{24} & \numprint{12}\\
orkut-links & \numprint{206}  & \numprint{107} & \numprint{110} & \numprint{66} & \numprint{64} & \numprint{44} & \numprint{35} & \numprint{27} & \numprint{19} & \numprint{19} & \numprint{10} & \numprint{15} & \numprint{9}\\
\end{tabular}

\end{subtable}
\begin{subtable}{\textwidth}
	\label{tab:sub_rt2}
	\caption{S: shared-frame,
		I: indexed-frame}\bigskip

	\centering
	\scriptsize
	\begin{tabular}{lr | rr | rr | rr | rr | rr | rr}
Network Name &  & \multicolumn{2}{c|}{1 core} & \multicolumn{2}{c|}{2 cores} & \multicolumn{2}{c|}{4 cores} & \multicolumn{2}{c|}{8 cores} & \multicolumn{2}{c|}{16 cores} & \multicolumn{2}{c}{32 cores}\\
 & Total & S & I & S & I & S & I & S & I & S & I & S & I\\
\midrule\midrule
dimacs9-NY & \numprint{249}  & \numprint{198} & \numprint{203} & \numprint{95} & \numprint{101} & \numprint{49} & \numprint{53} & \numprint{26} & \numprint{31} & \numprint{13} & \numprint{14} & \numprint{6} & \numprint{9}\\
dimacs9-COL & \numprint{405}  & \numprint{345} & \numprint{339} & \numprint{177} & \numprint{171} & \numprint{94} & \numprint{93} & \numprint{50} & \numprint{49} & \numprint{22} & \numprint{25} & \numprint{11} & \numprint{17}\\
roadNet-PA & \numprint{1961}  & \numprint{1975} & \numprint{1781} & \numprint{998} & \numprint{950} & \numprint{471} & \numprint{463} & \numprint{242} & \numprint{252} & \numprint{113} & \numprint{151} & \numprint{60} & \numprint{82}\\
roadNet-TX & \numprint{1965}  & \numprint{1994} & \numprint{1943} & \numprint{1022} & \numprint{1038} & \numprint{504} & \numprint{514} & \numprint{232} & \numprint{277} & \numprint{119} & \numprint{165} & \numprint{63} & \numprint{89}\\
dbpedia-all & \numprint{412}  & \numprint{411} & \numprint{391} & \numprint{214} & \numprint{205} & \numprint{96} & \numprint{83} & \numprint{49} & \numprint{66} & \numprint{18} & \numprint{25} & \numprint{14} & \numprint{36}\\
wikipedia\_link\_ceb & \numprint{1337}  & \numprint{1421} & \numprint{1268} & \numprint{691} & \numprint{665} & \numprint{370} & \numprint{399} & \numprint{158} & \numprint{162} & \numprint{86} & \numprint{135} & \numprint{68} & \numprint{117}\\
wikipedia\_link\_ru & \numprint{142}  & \numprint{134} & \numprint{125} & \numprint{67} & \numprint{69} & \numprint{42} & \numprint{35} & \numprint{25} & \numprint{26} & \numprint{14} & \numprint{19} & \numprint{11} & \numprint{16}\\
wikipedia\_link\_de & \numprint{155}  & \numprint{155} & \numprint{170} & \numprint{80} & \numprint{98} & \numprint{39} & \numprint{56} & \numprint{24} & \numprint{32} & \numprint{12} & \numprint{19} & \numprint{8} & \numprint{15}\\
wikipedia\_link\_it & \numprint{152}  & \numprint{120} & \numprint{129} & \numprint{74} & \numprint{68} & \numprint{38} & \numprint{45} & \numprint{21} & \numprint{26} & \numprint{11} & \numprint{20} & \numprint{10} & \numprint{16}\\
wikipedia\_link\_sv & \numprint{444}  & \numprint{438} & \numprint{429} & \numprint{272} & \numprint{246} & \numprint{111} & \numprint{108} & \numprint{55} & \numprint{58} & \numprint{30} & \numprint{46} & \numprint{24} & \numprint{54}\\
wikipedia\_link\_fr & \numprint{194}  & \numprint{181} & \numprint{168} & \numprint{96} & \numprint{92} & \numprint{55} & \numprint{60} & \numprint{30} & \numprint{35} & \numprint{15} & \numprint{29} & \numprint{14} & \numprint{20}\\
orkut-links & \numprint{206}  & \numprint{119} & \numprint{107} & \numprint{60} & \numprint{66} & \numprint{34} & \numprint{37} & \numprint{19} & \numprint{25} & \numprint{10} & \numprint{20} & \numprint{7} & \numprint{15}\\
\end{tabular}

\end{subtable}
\end{table}

\end{document}